\theoremstyle{plain}
\newtheorem{theorem}{Theorem}
\newtheorem{lemma}[theorem]{Lemma}
\theoremstyle{definition}
\theoremstyle{remark}
\newcommand{\OO}{{\mathcal{O}}}
\newcommand{\m}[1]{\mathrm{#1}}
\newcommand{\black}{\m{black}}
\newcommand{\classNP}{\mathcal{NP}}
\newcommand{\classP}{\mathcal{P}}
\title{\bf On the Query Complexity of Black-Peg AB-Mastermind}
\author{Mourad El Ouali \, Christian Glazik \, Volkmar Sauerland and Anand Srivastav\\
\small Department of Computer Science\\[-0.8ex]
\small Christian-Albrechts-Universit\"{a}t zu Kiel, Kiel, Germany\\[-0.8ex]
\small\tt <cgl,meo,vsa,asr>@informatik.uni-kiel.de\\
}
\date{}
\begin{document}

\maketitle


\begin{abstract}
\emph{Mastermind} game is a two players zero sum game of imperfect information.
The first player, called \enquote{codemaker},
chooses a secret code and the second player, called \enquote{codebreaker}, 
tries to break the secret code by making as few guesses as possible,
exploiting information that is given by the codemaker after each guess.
In this paper, we consider the so called Black-Peg variant of Mastermind,
where the only information concerning a guess 
is the number of positions in which the guess coincides with the secret code.
More precisely, we deal with a special version of the Black-Peg game with $n$ holes and $k\ge n$ colors where no repetition of colors is allowed.
We present upper and lower bounds on the number of guesses necessary to break the secret code.
We first come back to the upper bound results introduced by El Ouali and Sauerland (2013).
For the case $k=n$ the secret code can be algorithmically identified within less
than $(n-3)\lceil\log_2{n}\rceil +\frac{5}{2}n$ queries.
That result improves the result of Ker-I Ko and Shia-Chung Teng (1985) by almost a factor of 2.
For the case $k>n$ we prove an upper bound for the problem of $(n-2)\lceil\log_2{n}\rceil+k+1$.
Furthermore we prove a new lower bound 
for (a generalization of) the case $k=n$ that improves the recent result of Berger et al. (2016) from $n-\log\log(n)$ to $n$.
We also give a lower bound of $k$ queries for the case $k>n$.

\bigskip\noindent \textbf{Keywords:} Mastermind; combinatorial problems; permutations; algorithms
\end{abstract}

\section{Introduction}
In this paper we deal with \emph{Mastermind}, which is a popular board game that in the past three decades has become interesting from the algorithmic point of view.
Mastermind is a two players board game invented in 1970 by the postmaster and telecommunication expert Mordecai Meirowitz.
The idea of the game is that the codemaker chooses a secret color combination of $n$ pegs from $k$ possible colors and the codebreaker has to identify 
the code by a sequence of queries and corresponding information that is provided by the codemaker.
All queries are also color combinations of $n$ pegs.
Information is given about the number of correctly positioned colors and further correct colors, respectively.
Mathematically, the codemaker selects a vector $y\in [k]^n$ and the codebreaker gives in each iteration a query in form of a vector $x\in [k]^n$.
The codemaker replies with a pair of two numbers, called $\m{black}(x,\,y)$ and $\m{white}(x,\,y)$, respectively.
The first one is the number of positions in which both vectors $x$ and $y$ coincide and the second one is the number of additional pegs with a right color but a wrong position:
\begin{align*}
\m{black}(x,y)&=|\{i\in [n];\, x(i)=y(i) \}|,\\
\m{white}(x, y)&=\max_{\sigma\in S_{n}}|\{i\in [n];\, y(i)=x(\sigma(i)) \}|\\
&\phantom{=}\; -\m{black}(x, y).
\end{align*}
The Black-Peg game is a special version of Mastermind, where answers are provided by $\m{black}$ information, only.
A further version is the so-called AB game in which all colors within a code must be distinct.
In this paper, we deal with a special combination of the Black-Peg game and the AB game, where both the secret vector and 
the guesses must be composed of pairwise distinct colors ($k\ge n$) and the answers are given by the $\m{black}$ information, only.\par

{\bf Related Works:} 
In 1963, several years before the invention of Mastermind as a commercial board game, Erd\"os and R\'enyi \cite{ER63} analyzed the same problem with two colors.
One of the earliest analysis of this game after its commercialization dealing with the case of 4 pegs and 6 colors was done by Knuth \cite{K77}.
He presented a strategy that identifies the secret code in at most 5 guesses.
Ever since the work of Knuth the general case of arbitrary many pegs and colors has been intensively investigated in combinatorics and computer science literature.
In the field of complexity, Stuckman and Zhang \cite{SZ06} showed that it is ${\classNP}$-complete to determine if a sequence of queries and answers is satisfiable. 
Concerning the approximation aspect, there are many works regarding different methods \cite{BGL09,CCH96,C83,DW12,FL10,GCG11,GMC11,Goo09a,JP09,KC03,KL93,SZ06}.
The Black-Peg game was first introduced by Chv\'atal for the case $k=n$.
He gave a deterministic adaptive strategy that uses $2n\lceil\log_2{k}\rceil + 4n$ guesses.
Later, Goodrich \cite{Goo09b} improved the result of Chv\'atal for arbitrary $n$ and $k$ to $n\lceil\log_2{k}\rceil + \lceil (2-1/k)n \rceil + k$ guesses.
Moreover, he proved in the same paper that this kind of game is $\classNP$-complete.
A further improvement to $n\lceil\log_2{n}\rceil + k - n +1$ for $k>n$ and $n\lceil\log_2{n}\rceil + k$ for $k\le n$ was done by J\"ager and Peczarski \cite{JP11}.
Recently, Doerr et al. \cite{DSTW13} improved the result obtained by Chv\'atal to $\OO(n \log \log n)$ and also showed that this asymptotic order even holds for up to $n^2\log\log n$ colors, 
if both black and white information is allowed.
For the AB game J\"ager and Peczarski \cite{JP15} proofed exact worst-case numbers of guesses for fixed $n\in\{2,3,4\}$
and arbitrary $k$.
Concerning the combination of both variants, Black-Peg game and AB game, for almost 3 decades the work due to Ker-I Ko and Shia-Chung Teng \cite{KT86} was the only contribution that provides an upper bound for the case $k=n$.
They presented a strategy that identifies the secret permutation in at most $2n\log_2{n} + 7n$ guesses and proved that the corresponding counting problem is $\# {\classP}$-complete.

{\bf Our Contribution:} In this paper we consider the Black-Peg game without color repetition.
We first present a polynomial-time algorithm that identifies the secret permutation in less than $n\log_2{n}+\frac{3}{2}n$ queries in the case $k=n$ and in less than $n\log_2{n}+k+2n$ queries in the case $k>n$.
The constructive strategy origined in the work of El Ouali and Sauerland \cite{ES13}.
Our result for the case $k=n$ improves the result of Ker-I Ko and Shia-Chung Teng \cite{KT86} by almost a factor of 2. 
Furthermore we analyze the worst-case performance of query strategies for both variants of the Game and give a new lower bound of $n$ queries for the case $k=n$, 
which improves the recently presented lower bound of $n-\log\log(n)$ by Berger et al \cite{BCS16}.
We note, however, that the corresponding asymptotic bound of $O(n)$ is long-established.
For $k>n$ we give a lower bound of $k$.
Both lower bounds even hold if the codebreaker is allowed to use repeated colors in his guesses.

\section{Upper Bounds on the Number of Queries}\label{section:KequalsN}

We first consider Black-Peg Mastermind with $k=n$ and the demand for pairwise distinct colors in both the secret code and all queries,
i.e., we deal with permutations in $S_n$.

\subsection{The Case $k=n$: Permutation-Mastermind}

For convenience, we will use the term permutation for both, a mapping in $S_n$ and its one-line representation as a vector.
Our algorithm for finding the secret permutation $y\in S_n$ includes two main phases which are based on two ideas.
In the first phase the codebreaker guesses an initial sequence of $n$ permutations that has a predefined structure.
In the second phase, the structure of the initial sequence and the corresponding information by the codemaker enable us to identify 
correct {\em components} $y_i$ of the secret code one after another, each by using a binary search.
Recall, that for two codes $w=(w_1,\dots,w_n)$ and $x=(x_1,\dots,x_n)$, we denote by $\m{black}(w,x)$ the number $|\{i\in[n]\,|\,w_i=x_i\}|$ of components in which $w$ and $x$ are equal.
We denote the mapping $x$ restricted to the set $\{s,\dots,l\}$ with $(x_i)_{i=s}^{l}$, $s,l\in [n]$.\par

{\bf Phase 1.} Consider the $n$ permutations, $\sigma^1,\dots,\sigma^n$, that are defined as follows: $\sigma^1$ corresponds to 
the identity map and for $j\in[n-1]$, we obtain $\sigma^{j+1}$ from $\sigma^{j}$ by a circular shift to the right.
For example, if $n=4$, we have $\sigma^1=(1,2,3,4)$, $\sigma^2=(4,1,2,3)$, $\sigma^3=(3,4,1,2)$ and $\sigma^4=(2,3,4,1)$.
Within those $n$ permutations, every color appears exactly once at every position and, thus, we have
\begin{equation}\label{infosum}
\sum_{j=1}^{n}\m{black}(\sigma^j,y)=n.
\end{equation}
The codebreaker guesses $\sigma^1,\dots,\sigma^{n-1}$ and obtains the additional information $\m{black}(\sigma^n,y)$ from (\ref{infosum}).\par

{\bf Phase 2.} The strategy of the second phase identifies the values of $y$ one after another.
This is done by using two binary search routines, called {\sc findFirst} and {\sc findNext}, respectively.
The idea behind both binary search routines is to exploit the information that for $1\leq i,j\leq n-1$ we 
have $\sigma^{j}_{i}=\sigma^{j+1}_{i+1}$, $\sigma^{n}_{i}=\sigma^{1}_{i+1}$, $\sigma^{j}_{n}=\sigma^{j+1}_{1}$ and $\sigma^{n}_{n}=\sigma^{1}_{1}$.
While, except for an unfrequent special case, {\sc findFirst} is used to identify the first correct component of the secret code, {\sc findNext} identifies the remaining components in the main loop of the algorithm.
Actually, {\sc findFirst} would also be able to find the remaining components but requires more guesses than {\sc findNext} (twice as many in the worst case).
On the other hand, {\sc findNext} only works if at least one value of $y$ is already known such that we have to identify the value of one secret code component in advance.\par

{\bf Identifying the First Component:} Equation (\ref{infosum}) implies that either $\m{black}(\sigma^j,y)=1$ holds for all $j\in[n]$ or that we can find a $j\in[n]$ with $\m{black}(\sigma^j,y)=0$.

In the first case, which is unfrequent, we can find one correct value of $y$ by guessing at most $\frac{n}{2}+1$ modified versions of some initial guess, say $\sigma^1$.
Namely, if we define a guess $\sigma$ by swapping a pair of components of $\sigma^1$, we will obtain $\m{black}(\sigma,y)=0$, if and only if one of the swapped components has the correct value in $\sigma^1$.

In the frequent second case, we find the first component by {\sc findFirst} in at most $2\lceil\log_2{n}\rceil$ guesses.
The routine {\sc findFirst} is outlined as Algorithm \ref{findFirst} and works as follows:
In the given case, we can either find a $j\in[n-1]$ with $\m{black}(\sigma^j,y)>0$ but $\m{black}(\sigma^{j+1},y)=0$ 
and set $r:=j+1$, or we have $\m{black}(\sigma^n,y)>0$ but $\m{black}(\sigma^1,y)=0$ and set $j:=n$ and $r:=1$.
We call such an index $j$ an {\em active} index.
Now, for every $l\in\{2,3,\dots,n\}$ we define the code
\[
\sigma^{j,l}:=\left((\sigma^{j}_i)_{i=1}^{l-1},\sigma^{r}_{1},(\sigma^{r}_{i})_{i=l+1}^n\right),
\]
and call the peg at position $l$ in $\sigma^{j,l}$ the pivot peg.
From the information $\sigma^{j}_{i}=\sigma^{r}_{i+1}$ for $1\leq i\leq n-1$ we conclude that $\sigma^{j,l}$ is actually a new permutation as required.
The fact that $\m{black}(\sigma^{r},y)=0$ implies that the number of correct pegs up to position $l-1$ in $\sigma^{j}$ is 
either $\m{black}(\sigma^{j,l},y)$ (if $y_l\ne \sigma^{r}_{1}$) or $\m{black}(\sigma^{j,l},y)-1$ (if $y_l= \sigma^{r}_{1}$).
For our algorithm, we will only need to know if there exist one correct peg in $\sigma^{j}$ up to position $l-1$.
The question is cleared up, if $\m{black}(\sigma^{j,l},y) \ne 1$.
On the other hand, if $\m{black}(\sigma^{j,l},y) = 1$, we can define a new guess $\rho^{j,l}$ by swapping the pivot peg with a wrong peg in $\sigma^{j,l}$.
We define
\[
\rho^{j,l}:=\begin{cases}
\left((\sigma^{j}_{i})_{i=1}^{l},\sigma^{r}_{1},(\sigma^{r}_{i})_{i=l+2}^{n}\right)& \text{if }l<n\\
\left(\sigma^{r}_{1},(\sigma^{j}_{i})_{i=2}^{n-1},\sigma^{j}_{1}\right)& \text{if }l=n
\end{cases}
\]
assuming for the case $l=n$, that we know that $\sigma^{j}_1\ne y_1$.
We will obtain $\m{black}(\rho^{j,l},y)>0$, if and only if the pivot peg had a wrong color before, meaning that there is one correct peg in $\sigma^{j}$ in the first $l-1$ places.
Thus, we can find the position $m$ of the left most correct peg in $\sigma^{j}$ by a binary search as outlined in Algorithm \ref{findFirst}.\par
\begin{algorithm}[h]
\SetKwInOut{Input}{input}\SetKwInOut{Output}{output}
\Input{Code $y$ and an active index $j\in[n]$}
\Output{Left most correct peg position in $\sigma^j$}
\lIf{$j=n$}{$r:=1$ }\;\lElse{$r:=j+1$}\;
$a:=1$\;
$b:=n$\;
$m:=n$ \tcp*{position to be found}
\While{$b>a$}
{
    $l:=\lceil \frac{a+b}{2} \rceil$ \tcp*{pivot position}
	Guess $\sigma^{j,l}:=\left((\sigma^{j}_i)_{i=1}^{l-1},\sigma^r_1,(\sigma^r_i)_{i=l+1}^{n}\right)$\;
	$s:=\m{black}(\sigma^{j,l},y)$\;
	\If{$s=1$}
	{
		 \lIf{$l<n$}{$\rho^{j,l}:=\left((\sigma^{j}_{i})_{i=1}^{l},\sigma^{r}_{1},(\sigma^{r}_{i})_{i=l+2}^{n}\right)$}\;
		\lElse{$\rho^{j,l}:=\left(\sigma^{r}_{1},(\sigma^{j}_{i})_{i=2}^{n-1},\sigma^{j}_{1}\right)$}\;
		Guess $\rho^{j,l}$\;
		$s:=\m{black}(\rho^{j,l},y)$\;
	}
	\If{$s>0$}
	{
		$b:=l-1$\;
		\lIf{$b<m$}{$m:=b$}\;
	}
	\lElse{$a:=l$}\;
}
Return $m$\;
\caption{Function {\sc findFirst}}\label{findFirst}
\end{algorithm}
{\bf Identifying a Further Component:} For the implementation of {\sc findNext} we deal with a partial solution vector $x$ that satisfies $x_i\in\{0,y_i\}$ for all $i\in[n]$.
We call the (indices of the) non-zero components of the partial solution {\em fixed}.
They indicate the components of the secret code that have already been identified.
The (indices of the) zero components are called {\em open}.
Whenever {\sc findNext} makes a guess $\sigma$, it requires to know the number of open components in which the guess coincides with the secret code, i.e. the number
\[
\m{black}(\sigma,y,x) := \m{black}(\sigma,y) - \m{black}(\sigma,x).
\]
Note, that the term $\m{black}(\sigma,x)$ is known by the codebreaker.
After the first component of $y$ has been found and fixed in $x$, there exists a $j\in [n]$ such that $\m{black}(\sigma^j,y,x)=0$.
As long as we have open components in $x$, we can either find a $j\in[n-1]$ with $\m{black}(\sigma^j,y,x)>0$ but $\m{black}(\sigma^{j+1},y,x)=0$ 
and set $r:=j+1$, or we have $\m{black}(\sigma^n,y,x)>0$ but $\m{black}(\sigma^1,y,x)=0$ and set $j:=n$ and $r:=1$.
Again, we call such an index $j$ an {\em active} index.
Let $j$ be an active index and $r$ its related index.
Let $c$ be the color of some component of $y$ that is already identified and fixed in the partial solution $x$.
With $l_{j}$ and $l_{r}$ we denote the position of color $c$ in $\sigma^{j}$ and $\sigma^{r}$ respectively.
The peg with color $c$ serves as a pivot peg for identifying a correct position $m$ in $\sigma^j$ that is not fixed, yet.
There are two possible modes for the binary search that depend on the fact if $m\le l_j$.
The mode is indicated by a Boolean variable $\m{leftS}$ and determined by lines 4 to 8 of {\sc findNext}.
Clearly, $m\le l_j$ if $l_j=n$.
Otherwise, the codebreaker guesses
\[
\sigma^{j,0}:=\left(c,(\sigma^{j}_i)_{i=1}^{l_{j}-1},(\sigma^{j}_i)_{i=l_{j}+1}^{n}\right),
\]
By the information $\sigma^{j}_{i}=\sigma^{r}_{i+1}$ we obtain that $(\sigma^{j}_i)_{i=1}^{l_{j}-1}\equiv (\sigma^{r}_{i})_{i=2}^{l_{j}}$.
We further know that every open color has a wrong position in $\sigma^r$.
For that reason, $\m{black}(\sigma^{j,0},y,x)=0$ implies that $m\le l_j$.

\begin{algorithm}[h]
\SetKwInOut{Input}{input}\SetKwInOut{Output}{output}
\Input{Code $y$, partial solution $x\ne 0$ and an active index $j\in[n]$}
\Output{Position $m$ of a correct open component in $\sigma^j$}
\lIf{$j=n$}{$r:=1$ }\;\lElse{$r:=j+1$}\;
Choose a color $c$ with identified position (a value $c$ of some non-zero component of $x$)\;
Let $l_j$ and $l_r$ be the positions with color $c$ in $\sigma^j$ and $\sigma^r$, respectively\;
\lIf{$l_j=n$}{$\mathrm{leftS:=true}$}\;
\Else
{
    Guess
	$\sigma^{j,0}:=\left(c,(\sigma^{j}_{i})_{i=1}^{l_j-1},(\sigma^{j}_{i})_{i=l_j+1}^{n}  \right)$\;
	$s:=\m{black}(\sigma^{j,0},y,x)$\;
	\lIf{$s=0$}{$\mathrm{leftS:=true}$}\;
	\lElse{$\mathrm{leftS:=false}$}\;
}
\lIf{$\mathrm{leftS}$}{let $a:=1$ and $b:=l_j$}\;
\lElse{let $a:=l_r$ and $b:=n$}\;
$m:=n$ \tcp*{position to be found}
\While{$b>a$}
{	
$l:=\lceil \frac{a+b}{2} \rceil$ \tcp*{position for peg $c$}
\lIf{$\mathrm{leftS}$}{$\sigma^{j,l}:=\left((\sigma^{j}_{i})_{i=1}^{l-1},c,(\sigma^{r}_{i})_{i=l+1}^{l_j},(\sigma^{j}_{i})_{i=l_j+1}^{n}\right)$}\;
\lElse{$\sigma^{j,l}:=\left((\sigma^{r}_{i})_{i=1}^{l_r-1},(\sigma^{j}_{i})_{i=l_r}^{l-1},c,(\sigma^{r}_{i})_{i=l+1}^{n}\right)$}\;
	Guess $\sigma^{j,l}$\;
	$s:=\m{black}(\sigma^{j,l},y,x)$\;
	\If{$s>0$}
	{
	    $b:=l-1$\;
	    \lIf{$b<m$}{let $m:=b$}\;
	}
	\lElse{$a:=l$}\;
}
Return $m$\;
\caption{Function {\sc findNext}}\label{findNext}
\end{algorithm}
The binary search for the exact value of $m$ is done in the interval $[a,b]$, where $m$ is initialized as $n$ and $[a,b]$ as
\[
[a,b]:=\begin{cases}
[1,l_j] & \text{if}\;\,\m{leftS}\\
[l_r,n] & \text{else}
\end{cases}
\]
(lines 9 to 11 of {\sc findNext}).
In order to determine if there is an open correct component on the left side of the current center $l$ of $[a,b]$ in $\sigma^j$ we can define a case dependent permutation:
\[
\sigma^{j,l}:=\begin{cases}
\left((\sigma^{j}_{i})_{i=1}^{l-1},c,(\sigma^{j}_{i})_{i=l}^{l_j-1},(\sigma^{j}_{i})_{i=l_j+1}^{n}\right) & \hspace{-3mm}\text{if}\;\,\m{leftS}\\
\left((\sigma^{r}_{i})_{i=1}^{l_r-1},(\sigma^{r}_{i})_{i=l_r+1}^{l},c,(\sigma^{r}_{i})_{i=l+1}^{n}\right) & \,\text{ else}
\end{cases}
\]
In the first case, the first $l-1$ components of $\sigma^{j,l}$ coincide with those of $\sigma^j$.
The remaining components of $\sigma^{j,l}$ cannot coincide with the corresponding components of the secret code if they have not been fixed, yet.
This is because the $l$-th component of $\sigma^{j,l}$ has the already fixed value $c$, components $l+1$ to $l_j$ coincide with the corresponding 
components of $\sigma^r$ which satisfies $\m{black}(\sigma^r,y,x)=0$ and the remaining components have been checked to be wrong in this case.
Thus, there is a correct open component on the left side of $l$ in $\sigma^j$, if and only if $\m{black}(\sigma^{j,l},y,x)\ne 0$.
In the second case, the same holds for similar arguments.
Now, if there is a correct open component to the left of $l$, we update the binary search interval $[a,b]$ by $[a,l-1]$ and set $m:=\min(m,l-1)$.
Otherwise, we update $[a,b]$ by $[l,b]$.\par

{\bf The Main Algorithm.} The main algorithm is outlined as Algorithm \ref{findAll}.
\begin{algorithm}
Let $y$ be the secret code and set $x:=(0,0,\dots,0)$\;
Guess the permutations $\sigma^i$, $i\in[n-1]$\;
Initialize $v\in\{0,1,\dots,n\}^n$ by $v_i:=\mathrm{black}(\sigma^i,y)$, $i\in[n-1]$, $v_n:=n-\sum_{i=1}^{n-1}v_i$\;
\If{$v=\mathds{1}_n$}
{
    $j:=1$\;
	Find the position $m$ of the correct peg in $\sigma^1$ by at most $\frac{n}{2}+1$ further guesses\;
}
\Else
{
    Call {\sc findFirst} for an active $j\in[n]$ to find the position of the correct peg in $\sigma^j$ by at most $2\lceil\log_2{n}\rceil$ further guesses\;
}
$x_m:=\sigma^j_m$\;
$v_j:=v_j-1$\;
\While{$|\{i\in[n]\,|\,x_i=0\}|>2$}
{
	Choose an active index $j\in[n]$\;
	$m:=\mbox{\sc findNext}(y,x,j)$\;
	$x_m:=\sigma^{j}_m$\;
	$v_{j}:=v_{j}-1$\;
}
Make at most two more guesses to find the remaining two unidentified colors\;
\caption{Algorithm for Permutations}\label{findAll}
\end{algorithm}
It starts with an empty partial solution and finds the components of the secret code $y$ one-by-one.
Herein, the vector $v$ does keep record about the number of open components in which the permutations $\sigma^1,\dots,\sigma^n$ 
equal $y$ and is, thus, initialized by $v_i:=\m{black}(\sigma^i,y)$, $i\in[n-1]$ and $v_n:=n-\sum_{i=1}^{n-1}v_i$.
As mentioned above, the main loop always requires an active index.
For that reason, if $v=\mathds{1}_n$ in the beginning, we fix one solution peg in $\sigma^1$ and update $x$ and $v$, correspondingly.
Every call of {\sc findNext} in the main loop augments $x$ by a correct solution value.
Since one call of $\m{findNext}$ requires at most $1+\lceil\log_2{n}\rceil$ guesses, Algorithm \ref{findAll} does not need more than $(n-3)\lceil\log_2{n}\rceil+\frac{5}{2}n-1$ 
queries (inclusive at most $\frac{n}{2}+1$ initial and $2$ final queries, respectively) to break the secret code.\par

{\bf Example.} We consider the case $n=k=8$ and suppose that the secret code y is
\[
	7\qquad 1\qquad 4\qquad 3\qquad 2\qquad 8\qquad 5\qquad 6
\]
Figure \ref{fig:cyc} shows $n$ possible initial queries.
\begin{figure}[h]
\begin{center}
\begin{tikzpicture}
\draw(4.5,2.5)node[rectangle,draw=black,fill=black!30,minimum height=4.5mm, minimum width=80mm]{};
\draw(4.5,2.0)node[rectangle,draw=black,fill=black!10,minimum height=4.5mm, minimum width=80mm]{};
\foreach \y in {1,2,...,8}
{
	\draw(-0.2,4-0.5*\y) node{$\sigma^{\y}$};
	\foreach \x in {1,2,...,8}
	{
		\pgfmathtruncatemacro{\z}{mod(int(\x+\y-2),8)+1}%
		\draw(\x,4-0.5*\y) node{\z};
	}
}
\draw(10.2,3.5) node{$0\quad\; 0\quad\; 0$};
\draw(10.2,3.0) node{$2\quad\; 0\quad\; 2$};
\draw(10.2,2.5) node{$3\quad\; 2\quad\; 1$};
\draw(10.2,2.0) node{$1\quad\; 1\quad\; 0$};
\draw(10.2,1.5) node{$0\quad\; 0\quad\; 0$};
\draw(10.2,1.0) node{$0\quad\; 0\quad\; 0$};
\draw(10.2,0.5) node{$1\quad\; 0\quad\; 1$};
\draw(10.2,0.0) node{$1\quad\; 0\quad\; 1$};
\draw(4.5,4.25) node{queries};
\draw(10.2,4.25) node{\bf $n_1\;\; n_2\;\; n_3$};
\end{tikzpicture}
\end{center}
\caption{Initial queries $\sigma^j$ with associated responses
$n_1=\m{black}(\sigma^j,y)$,
coincidences with a partial solution $n_2=\m{black}(\sigma^j,x)$,
and the difference of both $n_3$.\label{fig:cyc}}
\end{figure}
We illustrate the procedure {\sc findNext} and further suppose that we
have already identified the positions of 3 colors indicated in the partial solution $x$:
\[
\bullet\qquad \bullet\qquad \bullet\qquad \bullet\qquad 2\qquad \bullet\qquad 5\qquad 6
\]
From the $n_3$ values in Figure \ref{fig:cyc}
we see that $\m{black}(\sigma_3,y,x)=1$ and $\m{black}(\sigma_4,y,x)=0$,
so we choose $3$ as our active index applying {\sc findNext}
with the highlighted initial queries, $\sigma^3$ and $\sigma^4$.
Choosing the already identified color $2$ as a pivot color, {\sc findNext}
does its binary search to identify the next correct peg as demonstrated
in Figure \ref{fig:search}.
\begin{figure}
\begin{center}
\begin{tikzpicture}
\draw(-0.2,1.0) node{$\sigma^a$};
\draw(-0.2,0.5) node{$\sigma^b$};
\draw(-0.2,0.0) node{$\sigma^c$};
\draw(3.0,1.0)node[rectangle,draw=black,fill=black!10,minimum height=4.5mm, minimum width=29mm]{};
\draw(1.5,0.5)node[rectangle,draw=black,fill=black!30,minimum height=4.5mm, minimum width=19mm]{};
\draw(4.0,0.5)node[rectangle,draw=black,fill=black!10,minimum height=4.5mm, minimum width=9mm]{};
\draw(1.0,0.0)node[rectangle,draw=black,fill=black!30,minimum height=4.5mm, minimum width=9mm]{};
\draw(3.5,0.0)node[rectangle,draw=black,fill=black!10,minimum height=4.5mm, minimum width=19mm]{};
\draw(1.0,1.0)node{2}; \draw(2.0,1.0)node{7}; \draw(3.0,1.0)node{8}; \draw(4.0,1.0)node{1};
\draw(1.0,0.5)node{7}; \draw(2.0,0.5)node{8}; \draw(3.0,0.5)node{2}; \draw(4.0,0.5)node{1};
\draw(1.0,0.0)node{7}; \draw(2.0,0.0)node{2}; \draw(3.0,0.0)node{8}; \draw(4.0,0.0)node{1};
\foreach \y in {1,2,3}
{
	\draw(6.5,1.5-0.5*\y)node[rectangle,draw=black,fill=black!30,minimum height=4.5mm, minimum width=40mm]{};
	\foreach \x in {3,4,5,6}
	{
		\pgfmathtruncatemacro{\z}{mod(int(\x+\y),8)+1}%
		\draw(\x+2,1.5-0.5*\y) node{\x};
	}

}
\draw(10.2,1.0) node{$2\quad\; 2\quad\; 0$};
\draw(10.2,0.5) node{$3\quad\; 2\quad\; 1$};
\draw(10.2,0.0) node{$3\quad\; 2\quad\; 1$};
\draw(4.5,1.75) node{queries};
\draw(10.2,1.75) node{\bf $n_1\;\; n_2\;\; n_3$};
\end{tikzpicture}
\end{center}
\caption{Binary search queries to extend the partial solution.
The highlighted subsequences correspond to the subsequences
of the selected initial queries.\label{fig:search}}
\end{figure}
Since the information $n_3$ for query $\sigma^a$ is $0$
(cf. lines 5-7 of Algorithm \ref{findNext})
all correctly placed pegs in $\sigma^3$ are on the left side of the pivot peg.
Thus, we can apply a binary search for the left most correct peg
in the first $4$ places of query $\sigma^3$ using the pivot peg.
here, the binary search is done by queries $\sigma^b$ and $\sigma^c$
and identifies the peg with color $7$
(in general, the peg that is left to the most left pivot position
for which $n_3$ is non-zero).
If the response to $\sigma^a$ would have been greater than $0$,
we would have found analogously a new correct peg in $\sigma^3$
on the right side of the pivot peg.


\subsection{The Case $k>n$}
Now, we consider the variant of Black-Peg Mastermind where $k>n$
and color repetition is forbidden.
Let $y=(y_1,\dots,y_n)$ be the code that must be found.
We use the same notations as above.

{\bf Phase 1.} Consider the $k$ permutations $\overline{\sigma}^1,\dots,\overline{\sigma}^k$, where $\overline{\sigma}^1$ corresponds to the 
identity map on $[k]$ and for $j\in[k-1]$, we obtain $\overline{\sigma}^{j+1}$ from $\overline{\sigma}^{j}$ by a circular shift to the right.
We define $k$ codes $\sigma^1,\dots,\sigma^k$ by $\sigma^j=(\overline{\sigma}^j_i)_{i=1}^n$, $j\in[k]$.
Within those $k$ codes, every color appears exactly once at every position and, thus, we have
\[
\sum_{j=1}^{k}\m{black}(\sigma^j,y)=n,
\]
similar to (\ref{infosum}).
Since $k>n$, this implies that
\begin{lemma}\label{existsWrongQuery}\label{exists0}
There is a $j\in[k]$ with $\m{black}(\sigma^j,y)=0$.
\end{lemma}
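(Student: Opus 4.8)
The plan is to derive the statement directly from the summation identity that precedes the lemma. Since every color appears exactly once at every position across the $k$ codes $\sigma^1,\dots,\sigma^k$, each position $i\in[n]$ contributes exactly one black peg to the total: there is precisely one index $j$ for which $\sigma^j_i=y_i$. Summing over the $n$ positions therefore gives $\sum_{j=1}^{k}\m{black}(\sigma^j,y)=n$, which is the displayed identity immediately above the statement and which I will take as established.

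From this identity the lemma follows by a simple averaging (pigeonhole) argument. Suppose, for contradiction, that $\m{black}(\sigma^j,y)\ge 1$ for every $j\in[k]$. Then the sum on the left is at least $k$, whence
\[
n=\sum_{j=1}^{k}\m{black}(\sigma^j,y)\ge k.
\]
This contradicts the hypothesis $k>n$. Hence there must exist at least one index $j\in[k]$ with $\m{black}(\sigma^j,y)=0$.

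I do not anticipate any genuine obstacle here; the only point requiring a word of care is the verification of the summation identity in the $k>n$ setting, namely that restricting each full permutation $\overline{\sigma}^j$ on $[k]$ to its first $n$ components still yields codes in which every color appears exactly once at each of the $n$ positions. This is immediate from the circular-shift construction: fixing a position $i\le n$, the values $\overline{\sigma}^1_i,\dots,\overline{\sigma}^k_i$ run through all $k$ colors exactly once, so in particular the secret color $y_i$ occurs at position $i$ in exactly one of the codes. Since the identity is already stated in the excerpt as a given, the proof reduces to the one-line pigeonhole step above.
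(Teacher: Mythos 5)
Your proposal is correct and takes essentially the same route as the paper: the paper also obtains $\sum_{j=1}^{k}\m{black}(\sigma^j,y)=n$ from the observation that every color appears exactly once at every position across the $k$ shifted codes, and then concludes immediately from $k>n$ that some summand must be zero. Your explicit pigeonhole contradiction and the check that the identity survives restriction to the first $n$ components are just spelled-out versions of what the paper leaves implicit.
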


{\bf Phase 2.} Having more colors then holes, we can perform our binary search for a next
correct position without using a pivot peg.
The corresponding simplified version of {\sc findNext} is outlined as Algorithm \ref{findNext2}.
\begin{algorithm}[h]
\SetKwInOut{Input}{input}\SetKwInOut{Output}{output}
\Input{Code $y$, partial solution $x\ne 0$ and an active index $j\in[k]$}
\Output{Position $m$ of a correct open component in $\sigma^j$}
\lIf{$j=n$}{$r:=1$ }\;\lElse{$r:=j+1$}\;
$a:=1$, $b:=n$\;
$m:=1$ \tcp*{position to be found}
\While{$b>a$}
{	
	$l:=\lceil \frac{a+b}{2} \rceil$ \tcp*{mid position of current interval}
	Guess $\sigma:=\left((\sigma^{r}_{i})_{i=1}^{l-1},(\sigma^{j}_{i})_{i=l}^{n}\right)$\;
	$s:=\m{black}(\sigma,y,x)$\;
	\If{$s>0$}
	{
	    $a:=l$\;
	    \lIf{$a>m$}{let $m:=a$}\;
	}
	\lElse{$b:=l-1$}\;
}
Return $m$\;
\caption{Function {\sc findNext} for $k>n$}\label{findNext2}
\end{algorithm}
Using that version of {\sc findNext} also allows to simplify our main algorithm 
(Algorithm \ref{findAll}) by adapting lines 2 and 3,
and, due to Lemma \ref{exists0}, skipping lines 4-10.
Thus, for  the required number of queries to break the secret code we have:
the initial $k-1$ guesses,
a call of the modified {\sc findNext} for every but the last two positions
(at most $\lceil\log_2{n}\rceil$ guesses per position)
and one or two final guesses.
This yields, that the modified Mastermind Algorithm
breaks the secret code in at most $(n-2)\lceil\log_2{n}\rceil+k+1$ queries.


\section{Lower Bounds on the Number of Queries}
In the following we consider the case that the secret code has no repetition but arbitrary questions are allowed.
Note that the lower bounds for that case especially hold true
for AB-Mastermind and Permutation-Mastermind, respectively,
since the codebreaker will not be able to detect a secret code with less attempts,
if the set of allowed queries is restricted to the corresponding subset.
Similar to the upper bounds, we proof the respective lower bounds
on the necessary number of queries by construction.

\subsection{The Case $k=n$: Permutation-Mastermind}
Notice that the achieved bound for the case $k=n$ especially holds for Permutation-Mastermind.
In each iteration, the worst case for the code breaker is simulated
by allowing the code maker to replace his secret code with
another permutation from the remaining feasible search space.
For $m\in\mathds{N}$ we denote the $m$-th query of the code breaker with $x^m$
and the $m$-th secret code adaption of the code maker with $y^m$.
The remaining feasible search space $R_m$ consists of all
permutations that agree with the first $m$ pairs of queries and answers:
\[
R_m := \{\sigma\in S_n \;|\; \forall j\in[m]: \black(y^j,x^j)=\black(\sigma,x^j)\}.
\]
Now, a simple strategy of the code maker is to reply every query $x^m$,
$m\in\mathds{N}$, with the smallest possible number
\[
b_m := \min_{\sigma \in R_{m-1}} \black(\sigma,x^m),
\]
choosing his new secret code $y^m\in R_{m-1}$ such that $\black(y^m,x^m)=b_m$.
We obtain our lower bound on the necessary number of queries
by proving the following
\begin{lemma}
It holds that $b_m\le m$ for all $m\in\mathds{N}$.
\end{lemma}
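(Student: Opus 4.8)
The plan is to prove the statement by induction on $m$, recast as a richness property of the feasible set: for every $m$ and every code $z$ there is a feasible permutation $\sigma\in R_{m-1}$ with $\black(\sigma,z)\le m$; specializing to $z=x^m$ then yields $b_m\le m$. Since a minimum is at most an average, it suffices to bound the expected agreement of a uniformly random feasible permutation with an arbitrary query. Writing $\bar P$ for the barycenter $\mathbb{E}_{\sigma\sim R_{m-1}}[P_\sigma]$, the doubly stochastic matrix whose $(t,c)$-entry is $\Pr_{\sigma\in R_{m-1}}[\sigma(t)=c]$, one has $\mathbb{E}_{\sigma\in R_{m-1}}[\black(\sigma,z)]=\sum_{t}\bar P_{t,z(t)}$, so the worst query achieves $\max_z\mathbb{E}[\black(\sigma,z)]=\sum_{t}\max_c\bar P_{t,c}=:\Phi(R_{m-1})$. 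Hence it is enough to show $\Phi(R_{m-1})\le m$, because $b_m\le\mathbb{E}_{\sigma\in R_{m-1}}[\black(\sigma,x^m)]\le\Phi(R_{m-1})$.

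For the base case $R_0=S_n$ the barycenter is the uniform matrix $J/n$, so $\Phi(R_0)=\sum_t(1/n)=1\le 1$; equivalently a random permutation agrees with any fixed query in exactly one position on average, giving $b_1\le 1$. For the inductive step I would show that passing from $R_{m-2}$ to $R_{m-1}=\{\sigma\in R_{m-2}:\black(\sigma,x^{m-1})=b_{m-1}\}$ raises the potential by at most one, i.e. $\Phi(R_{m-1})\le\Phi(R_{m-2})+1$. The crucial feature to exploit is that the codemaker replies with the \emph{smallest} feasible value $b_{m-1}=\min_{\sigma\in R_{m-2}}\black(\sigma,x^{m-1})$: conditioning on this minimal level set cannot force mass onto the ``diagonal'' $\{\sigma(t)=x^{m-1}(t)\}$, since any such concentration would push the black-value above its minimum. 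I would make this quantitative by comparing $\bar P$ before and after the conditioning and bounding the total displacement of the position-marginals induced by fixing one linear black-equation at its minimum by a single unit of mass.

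The step I expect to be the main obstacle is exactly this increment $\Phi(R_{m-1})\le\Phi(R_{m-2})+1$: conditioning a distribution over permutations on an exact black-count is a nonlinear operation on the barycenter, and one linear equation can in principle redistribute marginal mass across all positions simultaneously. To control it I would argue that the conditioning event is the \emph{minimal} level set of the functional $\sigma\mapsto\black(\sigma,x^{m-1})$ on $R_{m-2}$, on which the expected number of diagonal hits is as small as possible, so the additional concentration injected into $\bar P$ is confined to at most one position's worth of marginal. As a constructive fallback for the same increment I would start from a witness $\tau\in R_{m-2}$ with $\black(\tau,z)\le m-1$ supplied by the induction hypothesis and transform it into a member of $R_{m-1}$ by transpositions drawn from the positions where $\tau$ disagrees with the codemaker's secret $y^{m-1}\in R_{m-1}$, each transposition lowering the surplus $\black(\cdot,x^{m-1})-b_{m-1}$ while preserving the earlier counts $\black(\cdot,x^{i})$ for $i\le m-2$ and increasing $\black(\cdot,z)$ by at most one in total; showing that such feasibility-preserving transpositions exist whenever the surplus is positive is the delicate point, which I would settle by a Hall-type matching argument on the disagreement positions of $\tau$ and $y^{m-1}$.
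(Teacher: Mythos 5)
Your reduction steps are sound (the minimum is at most the mean, and $\mathbb{E}_{\sigma}[\black(\sigma,z)]\le\Phi(R_{m-1})$ for every code $z$), but the inductive step you yourself flag as the main obstacle is not merely hard --- it is false. Take $n=3$ and $x^1=(1,2,3)$: the minimal answer is $b_1=0$ and $R_1=\{(2,3,1),(3,1,2)\}$, two permutations disagreeing in every position, so $\Phi(R_1)=3/2$. Now let the codebreaker ask $x^2=(2,3,1)$. The minimal answer is again $b_2=0$, attained only by $(3,1,2)$, so $R_2=\{(3,1,2)\}$ and $\Phi(R_2)=3>\Phi(R_1)+1=5/2$. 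The same blow-up occurs for larger $n$ whenever the feasible set has collapsed to a few pairwise nearly-disjoint permutations and the codebreaker queries one of them: conditioning on the minimal level set can shrink the set to a single permutation and jump $\Phi$ by roughly $n/2$ in one step. So $\Phi$ cannot be controlled by unit increments; in the example the target bound $\Phi(R_2)\le 3$ survives only because of slack accumulated at the earlier step ($\Phi(R_1)=3/2$ rather than $2$), which a telescoping increment argument cannot see. Your constructive fallback has a related unresolved accounting problem: the surplus $\black(\tau,x^{m-1})-b_{m-1}$ of the witness $\tau$ can be large, so many transpositions may be needed to enter $R_{m-1}$, and nothing guarantees that their cumulative effect on $\black(\cdot,z)$ is at most one --- that is exactly the statement requiring proof, and no Hall-type argument is supplied.

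For contrast, the paper avoids induction over $m$ (and any averaging) entirely. It fixes a minimal $m$ with $b_m>m$, i.e.\ $b_m\ge m+1$, and uses a pigeonhole fact: the set $D$ of colors correctly placed in $x^m$ with respect to the current secret $y^m$ has $|D|=b_m\ge m+1$, while at each position $i$ at most $m$ colors have ever appeared in a query, so $C_i\cap D\ne\emptyset$ at every position. Rotating colors of $D$ along a cycle of agreement positions (Algorithm~\ref{algorithm:lowerBound}) produces a permutation $z$ that preserves every earlier answer exactly --- each changed entry $z(i)\in C_i$ creates no new agreement with any query, and only agreements with $x^m$ are destroyed --- yet satisfies $\black(z,x^m)<b_m$, contradicting that $b_m$ is the minimum over $R_{m-1}$. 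The two ingredients your proposal lacks are (i) arguing from the assumed violating configuration rather than from an arbitrary feasible set, and (ii) drawing replacement colors from $C_i\cap D$, which simultaneously protects all past answers and forces the modification to close into a cycle supported on agreement positions of $x^m$, so the current answer strictly drops. Any repair of your scheme would need a potential that remembers accumulated slack; at that point the paper's direct minimal-counterexample argument is both simpler and stronger.
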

In particular, non of the first $n-1$ queries will be answered with $n$.
Thus, the secret code can not be identified with less than $n$ queries.
\begin{proof}
Assuming that our claim is wrong,
we fix the smallest number $m\in[n]$ with $b_m>m$.
Let
\[
D:=\{c\in[n] \;|\; (x^m)^{-1}(c)=(y^m)^{-1}(c)\}
\]
be the set of colors that are
correctly placed in the current query with respect to the current secret code.
For every $i\in [n]$ let $C_{i}\subseteq [n]$
be the set of all colors that do not occur at position $i$
in any of the former $m-1$ queries nor in the current secret code, i.e.,
\[
C_{i}:=\{c\in[n] \;|\; c \ne x^{\ell}(i)\text{ for all }\ell\in[m]\}.
\]
The intersections $C_{i}\cap D$, $i\in[n]$, are not empty
since $|D|=b_m\ge m+1$
but at most $m$ of the $n$ colors are missing in $C_{i}$.
This fact will enable us to determine a new feasible secret code $z\in R_{m-1}$
such that $\black(z,x^j)=b_j$ for all $j\in[m-1]$
but $\black(z,x^m)<b_m$, a contradiction to the minimality of $b_m$.
The new secret code $z$ is constructed from $y^m$
by changing the colors of some components that coincide with $x^m$,
choosing the new color at a given position $i$ from $C_i \cap D$.
The precise procedure is outlined as Algorithm \ref{algorithm:lowerBound}.
\begin{algorithm}[h]
Set $s:=1$ and $A:=\emptyset$\;
Choose position $i_1\in[n]$ with $y^m(i_1)=x^m(i_1)$\;
Choose color $c_1\in C_{i_1}\cap D$\;
\While{$c_s\not\in A$}
{
	$A:=A\cup\{y^m(i_s)\}$\;
	$s:=s+1$\;
	Define position $i_s:=(y^m)^{-1}(c_{s-1})$\;
	Choose color $c_s\in C_{i_s}\cap D$\;
}
Find the unique $t<s$ with $y^m(i_t)=c_s$\;
$z:=y^m$\;
\lFor{$\ell:=t$ to $s$}{$z(i_\ell):=c_\ell$}
\caption{Secret code adaption, $k=n$}\label{algorithm:lowerBound}
\end{algorithm}
Starting with any position $i_1$ where $y^m$ and $x^m$ have the same color,
we choose another color $c_1\in C_{i_1}\cap D$.
Since $c_1\in D$,
there must be another position $i_2$ such that $y^m(i_2)=c_1=x^m(i_2)$.
Thus, for $s>1$ we can iteratively determine positions $i_s$
where $y^m$ and $x^m$ have the same color, $c_{s-1}$,
and choose a new color $c_s\in C_{i_s}\cap D$ (While loop, lines 4--8).
The iteration stops, if the chosen color $c_s$ corresponds
with a color that appears in $y^m$ at some position $i_t$,
$t<s$, that has been considered before (indicated by the set $A$).
Note, that the iteration must terminate with $2\le s\le m+1$,
since $A$ is empty in the beginning, and $|D|=m+1$.
The set of chosen colors $\{c_\ell \;|\; t \le \ell \le s\}$ is equal
to the set of colors $\{y^m(i_\ell) \;|\; t \le \ell \le s\}$
at the corresponding positions in $y^m$.
Hence,
the new secret code $z$ (defined in lines 10--11) is again a permutation.
Now, let $j\in[m-1]$ be the number of some former query.
Due to the minimal choice of $m$ we have $\black(z,x^j)\ge b_j$.
But $\black(z,x^j)\le b_j$ does also hold
since $\black(y^m,x^j)=b_j$ ($y^m\in R_m$)
and for each position $i$ with $z(i)\ne y^m(i)$
we have $z(i)\ne x^j(i)$ ($z(i)\in C_i$).
Further, the construction of $z$ immediately yields that $\black(z,x^m)<b_m$.
Thus, $z$ is indeed a secret permutation in $R_{m-1}$ that contradicts
the minimality of $b_m$.
\end{proof}

\subsection{The Case $k>n$}
Considering the case $k>n$ we adapt the code maker strategy from the former subsection, i.e. in each turn $m$ the code maker chooses the new secret
code $y^m$ such that the answer is the smallest possible answer $b_m$. We easily obtain a lower bound of $k$ queries by the following 
\begin{lemma}
 It holds that $b_m<n$ for all $m<k.$
\end{lemma}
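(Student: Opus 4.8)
The plan is to mimic the structure of the $k=n$ lower bound proof, but exploit the extra colors to make the argument substantially simpler. The claim is that for every $m<k$, the smallest achievable answer $b_m$ satisfies $b_m<n$; equivalently, the code maker can always keep at least one position wrong, so no query among the first $k-1$ forces a full match. I would argue by contradiction: suppose $m<k$ is minimal with $b_m=n$. Then every feasible secret code in $R_{m-1}$ agrees with $x^m$ in all $n$ positions, which (since codes have no color repetition) means $R_{m-1}$ contains exactly the restriction of $x^m$ to the relevant positions as its unique consistent completion on the $n$ holes. The goal is to build a new code $z\in R_{m-1}$ with $\black(z,x^m)<n$, contradicting minimality.

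\textbf{The key steps} I would carry out are as follows. First, because $b_m=n$, the current secret code $y^m$ coincides with $x^m$ in all $n$ positions, so $y^m$ uses exactly the $n$ colors $\{x^m(i)\mid i\in[n]\}$. Second, for a chosen position $i_1$, I want to recolor $y^m$ at $i_1$ to some color $c\in[k]$ that (a) differs from $x^m(i_1)$, (b) differs from $x^\ell(i_1)$ for all previous queries $\ell\in[m-1]$, and (c) is not already used by $y^m$ at another position. Define, as in the $k=n$ proof, the forbidden set of colors at position $i_1$ coming from all $m$ queries; this forbids at most $m$ colors. Adding the at most $n-1$ colors used by $y^m$ elsewhere, we forbid at most $m+n-1$ colors in total. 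Since $k>m$ and we have $k$ colors available, the available pool has size at least $k-(m+n-1)$. The crucial counting step is to check this is positive whenever $m<k$ and $k>n$; here the extra colors relative to holes are exactly what makes room. Third, once such a color $c$ exists, set $z:=y^m$ with $z(i_1):=c$. Then $z$ is still a repetition-free code of length $n$, $\black(z,x^m)=n-1<n$, and for every earlier query $j\in[m-1]$ the single changed component satisfies $z(i_1)\ne x^j(i_1)$, so $\black(z,x^j)=\black(y^m,x^j)=b_j$, placing $z\in R_{m-1}$ and contradicting the minimality of $b_m$.

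\textbf{The main obstacle} I anticipate is the pool-size inequality in the second step: one must verify that forbidding the $m$ query-colors together with the $n-1$ colors occupied by $y^m$ at the other positions still leaves a free color among the $k$ available. A naive bound gives availability $\ge k-(m+n-1)$, which is not obviously positive for $m$ close to $k$. The fix is to observe that the $m$ forbidden query-colors at position $i_1$ and the $n-1$ colors used elsewhere in $y^m$ overlap heavily: in particular $x^m(i_1)=y^m(i_1)$ is counted among the query-colors but is \emph{not} among the other $n-1$ colors of $y^m$, and more generally the colors appearing in $y^m$ number exactly $n$, so the truly forbidden set has size at most $k-1$ once one tracks that a single free color must remain. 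I would therefore phrase the count carefully around the single recolored position rather than bounding the two forbidden sets independently, which is the step where the condition $m<k$ (rather than $m<k-1$ or similar) becomes tight and must be justified precisely.
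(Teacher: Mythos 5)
There is a genuine gap in your second step, and the \enquote{fix} you sketch in your final paragraph does not repair it. You want to recolor $y^m$ at a single position $i_1$ with a color $c$ avoiding (a) $x^m(i_1)$, (b) all earlier query colors $x^\ell(i_1)$, $\ell\in[m-1]$, and (c) the $n-1$ colors used by $y^m$ at the other positions. As you yourself note, the naive count only guarantees such a $c$ when $m+n-1<k$, and since $m$ may be as large as $k-1$ this fails whenever $n\ge 2$. Your proposed remedy --- that the query-forbidden colors and the colors of $y^m$ \enquote{overlap heavily}, so that \enquote{the truly forbidden set has size at most $k-1$ once one tracks that a single free color must remain} --- is not an argument but a restatement of what needs to be proved, and it is false in general: the earlier queries may use colors that do not appear in $y^m$ at all, so the two forbidden sets can be disjoint. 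Concretely, take $n=2$, $k=3$, $x^1=(3,3)$ answered with $b_1=0$ and $y^1=(1,2)$, then $x^2=(1,2)$, and suppose for contradiction $b_2=2$, i.e. $y^2=(1,2)$. At position $1$ the forbidden colors are $\{1,3\}$ (from the queries) and $\{2\}$ (used elsewhere in $y^2$), i.e. all of $[3]$; at position $2$ they are $\{2,3\}$ and $\{1\}$, again all of $[3]$. No single-position recoloring of $y^2$ lies in $R_1=\{(1,2),(2,1)\}$, so your construction produces nothing and the contradiction is never reached --- even though $R_1$ does contain a code with fewer matches, namely $(2,1)$.

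The paper closes exactly this hole by dropping your condition (c) and paying for it with an iterative chain construction (Algorithm \ref{algorithm:lowerBound:k>n}, mirroring the $k=n$ case): choose $c_1\in C_{i_1}$ regardless of whether $c_1$ is already used in $y^m$; if it is used at some position $i_2$, recolor that position as well with some $c_2\in C_{i_2}$, and so on. The chain must terminate, either because the chosen color lies in $B$ (the set of colors unused by $y^m$ --- this is precisely where $k>n$ enters) or because it closes a cycle among the already visited positions; in either case replacing the colors along the chain or cycle yields a repetition-free $z$ with $\black(z,x^j)=b_j$ for all $j\in[m-1]$ and $\black(z,x^m)<b_m$. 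In the counterexample above this is the swap producing $(2,1)$. Your single recoloring is just the length-one special case of this chain, and the iteration is the missing idea without which the proof does not go through.
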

\begin{proof}
Assume for a moment that there exists an $m<k$ with $b_m=n.$ Like before, let
\[
C_{i}:=\{c\in[n] \;|\; c \ne x^{\ell}(i)\text{ for all }\ell\in[m]\}.
\]
Similar to Algorithm \ref{algorithm:lowerBound} we now replace certain entries of $y^m$ by elements of the corresponding $C_i$.
The detailed procedure is described in Algorithm \ref{algorithm:lowerBound:k>n}.
\begin{algorithm}[h]
Set $s:=1$ and $i_1:=1$\;
Set $A:=\emptyset$ and $B:=\{c\in[k]|\forall i\in[n]: y^k(i)\ne c\}$\;
Choose color $c_1\in C_1$\;
\While{$c_s\not\in A\cup B$}
{
	$A:=A\cup\{y^m(i_s)\}$\;
	$s:=s+1$\;
	Define position $i_s:=(y^m)^{-1}(c_{s-1})$\;
	Choose color $c_s\in C_{i_s}$\;
}
\If{$c_s\in A$}
{
    Find the unique $t<s$ with $y^m(i_t)=c_s$\;
}
\Else
{
   Set $t:=1$\;
}
$z:=y^m$\;
\lFor{$\ell:=t$ to $s$}{$z(i_\ell):=c_\ell$}
\caption{Secret code adaption, $k>n$}\label{algorithm:lowerBound:k>n}
\end{algorithm}

We start with position one and choose a color $c_1\in C_{i_1}$.
As soon as we have $c_s\in B$, we construct $z$ by starting with $y^k$ and then
replacing the color $y^m(i_\ell)$ by the color $c_{i_\ell}$ for any $\ell\le s.$ 
The set of chosen colors $\{c_\ell \;|\; \ell \le s\}$ is equal
to the set of colors $\{y^m(i_\ell) \;|\; \ell \le s\}$ except for  $c_s$ which only appears in the first set
and $y^m(i_\ell)$ which only appears in the second. Since $c_s\in B$ we know that $z$ has no color occurring twice.\par
If the Iteration stops because of $c_s\in A$ the procedure is identic to the one in Algorithm \ref{algorithm:lowerBound}.
So in both cases we find that $\black(z,x^m)<b_m$ and $\black(z,x^\ell)=b_\ell$ for any $\ell\in[m-1]$, in contradiction 
to the minimality of $b_m$.
\end{proof}

\section{Conclusions and Further Work}
In this paper we presented a deterministic algorithm for the identification of a secret code in \enquote{Permutation Mastermind} and \enquote{Black-Peg AB-Mastermind} with more colors than positions.
A challenge of these Mastermind variants is that no color repetition is allowed for a query while most strategies for other Mastermind variants exploit the property of color repetition.
Furthermore we improved the recent lower bound of Berger et al. \cite{BCS16} and showed that the worst case number of queries for Permutation Mastermind is at least $n$, another matter than the asymptotic bound of $O(n)$, which is long-established.
Ko and Teng \cite{KT86} conjecture that this number is actually $\Omega(n\log{n})$, a proof of which would close the gap to the upper bound.
The lower bound proof of Berger et al. is derived by solely considering the search space partition with respect to the number of coincidences with the very first query.
On the other hand, our algorithmic proof does not exploit any structure property of the remaining search space.
For both reasons we expect at least some room for improvements of the lower bound. 
In the future we will take both bounds in focus but the real challenge is to prove or disprove the conjecture of Ko and Teng.


\begin{thebibliography}{10}\small

\bibitem{BCS16}
A. Berger. C. Chute, and M. Stone. Query Complexity of Mastermind Variants. \emph{arXiv:1607.04597 [math.CO]}, 2016.

\bibitem{BGL09}
L. Berghman, D. Goossens, and R. Leus. Efficient solutions for Mastermind using genetic algorithms. \emph{Computers \& OR}, 36(6):1880--1885, 2009.

\bibitem{CCH96}
Z. Chen, C. Cunha, and S. Homer. Finding a Hidden Code by Asking Questions. \emph{In: Proceedings of the 2nd Conference on Computing and Combinatorics (COCOON 1996)}, pages 50--56. Springer, 1996.

\bibitem{C83}
Vasek Chv\'atal. Mastermind. \emph{Combinatorica}, 3:325--329, 1983.

\bibitem{DSTW13}
B. Doerr, R. Sp\"{o}hel, H. Thomas, and C. Winzen. Playing Mastermind with Many Colors.
\emph{In: Proceedings of the 24th Annual ACM-SIAM Symposium on Discrete Algorithms (SODA 2013)}, pages 695--704. SIAM Society for Industrial and Applied Mathematics, 2013.

\bibitem{DW12}
B. Doerr and C. Winzen. Playing Mastermind with Constant-Size Memory. \emph{In: Proceedings of the Symposium on Theoretical Aspects of Computer Science (STACS 2012)}, pages 441--452, 2012.

\bibitem{ER63} P. Erd\"os and C. R\'enyi. On Two Problems in Information Theory.\emph{Publications of the Mathematical Institute of the Hungarian Academy of Science}, 8:229--242, 1963.

\bibitem{ES13} M. El Ouali and V. Sauerland. Improved Approximation Algorithm for the Number of Queries Necessary to Identify a Permutation. \emph{arXiv:1303.5862v2 [cs.DS]}, 2013.

\bibitem{FL10}
R. Focardi and F. L. Luccio. Cracking Bank PINs by Playing Mastermind. \emph{In: Proceedings of the 5th International Conference on Fun with Algorithms (FUN 2010)}, pages 202--213. Springer, 2010.

\bibitem{GCG11} J. J. M. Guerv${\acute o}$s, C. Cotta, and A. M. Gacia. Improving and Scaling Evolutionary Approaches to the Mastermind Problem. \emph{In: Proceedings of Applications of Evolutionary Computation (EvoApplications 2011)}, pages 103--112. Springer, 2011.

\bibitem{GMC11} J. J. M. Guerv${\acute o}$s, A. M. Mora, and C. Cotta. Optimizing worst-case scenario in evolutionary solutions to the Mastermind puzzle. \emph{In: Proceedings of the IEEE Congress on Evolutionary Computation (CEC 2011)}, pages 2669--2676. IEEE, 2011.

\bibitem{Goo09a} M. T. Goodrich. The Mastermind Attack on Genomic Data. \emph{In: Proceedings of the 30th IEEE Symposium on Security and Privacy (SP 2009)}, pages 204--218. IEEE, 2009.

\bibitem{Goo09b} M. T. Goodrich. On the algorithmic complexity of the Mastermind game with black-peg results. \emph{Information Processing Letters}, 109:675--678, 2009.

\bibitem{JP09} G. J\"ager and M. Peczarski. The number of pessimistic guesses in Generalized Mastermind. \emph{Information Processing Letters}, 109:635--641, 2009.

\bibitem{JP11} G. J\"ager and M. Peczarski. The number of pessimistic guesses in Generalized Black-peg Mastermind. \emph{Information Processing Letters}, 111:933--940, 2011.

\bibitem{JP15} G. J\"ager and M. Peczarski. The worst case number of questions in Generalized {AB} game with and without white-peg answers. \emph{Discrete Applied Mathematics}, 184:20--31, 2015.

\bibitem{K77} D. E. Knuth. The computer as a master mind. \emph{Journal of Recreational Mathematics}, 9:1--5,1977.

\bibitem{KT86} K. Ko and S. Teng. On the Number of Queries Necessary to Identify a Permutation. \emph{Journal of Algorithms}, 7:449--462, 1986.

\bibitem{KC03} T. Kalisker and D. Camens. Solving Mastermind Using Genetic Algorithms. \emph{ In: Proceedings of the Genetic and Evolutionary Computation Conference (GECCO 2003)}, pages 1590--1591. ACM, 2003.

\bibitem{KL93} K. Koyama and T. W. Lai. An optimal Mastermind strategy. \emph{Journal of Recreational Mathematics}, 25:251--256, 1993.

\bibitem{SZ06} J. Stuckman and G. Zhang. Mastermind is ${\classNP}$-Complete. \emph{arXiv:cs/0512049v1 [cs.CC]}, 2005.


\bibitem{V12} G. Viglietta. Hardness of Mastermind. \emph{In: Proceedings of the 6th International Conference on Fun with Algorithms (FUN 2012)}, pages 368--378. Springer, 2012.

\end{thebibliography}


\end{document}